\def\tsum{\mathop{\textstyle \sum }}%
\def\tprod{\mathop{\textstyle \prod }}%
\def\dsum{\mathop{\displaystyle \sum }}%
\chardef\@x10\chardef\@xv60
\def\tcitime{
\def\@time{%
  \@minute\time\@hour\@minute\divide\@hour\@xv
  \ifnum\@hour<\@x 0\fi\the\@hour:%
  \multiply\@hour\@xv\advance\@minute-\@hour
  \ifnum\@minute<\@x 0\fi\the\@minute
  }}%
\def\QCTOpt[#1]#2{%
  \def\QCTOptB{#1}
  \def\QCTOptA{#2}
}
\def\QCTNOpt#1{%
  \def\QCTOptA{#1}
  \let\QCTOptB\empty
}
\def\Qct{%
  \@ifnextchar[{%
    \QCTOpt}{\QCTNOpt}
}
\def\QCBOpt[#1]#2{%
  \def\QCBOptB{#1}
  \def\QCBOptA{#2}
}
\def\QCBNOpt#1{%
  \def\QCBOptA{#1}
  \let\QCBOptB\empty
}
\def\Qcb{%
  \@ifnextchar[{%
    \QCBOpt}{\QCBNOpt}
}
\def\PrepCapArgs{%
  \ifx\QCBOptA\empty
    \ifx\QCTOptA\empty
      {}%
    \else
      \ifx\QCTOptB\empty
        {\QCTOptA}%
      \else
        [\QCTOptB]{\QCTOptA}%
      \fi
    \fi
  \else
    \ifx\QCBOptA\empty
      {}%
    \else
      \ifx\QCBOptB\empty
        {\QCBOptA}%
      \else
        [\QCBOptB]{\QCBOptA}%
      \fi
    \fi
  \fi
}
\def\GRAPHICSPS#1{%
 \ifcase\GRAPHICSTYPE
   \special{ps: #1}%
 \or
   \special{language "PS", include "#1"}%
 \fi
}%
\def\graffile#1#2#3#4{%
    \leavevmode
    \raise -#4 \BOXTHEFRAME{%
        \hbox to #2{\raise #3\hbox to #2{\null #1\hfil}}}%
}%
\def\draftbox#1#2#3#4{%
 \leavevmode\raise -#4 \hbox{%
  \frame{\rlap{\protect\tiny #1}\hbox to #2%
   {\vrule height#3 width\z@ depth\z@\hfil}%
  }%
 }%
}%
\newif\ifwasdraft
\def\GRAPHIC#1#2#3#4#5{%
 \ifnum\draft=\@ne\draftbox{#2}{#3}{#4}{#5}%
  \else\graffile{#1}{#3}{#4}{#5}%
  \fi
 }%
\newif\ifBoxFrame \BoxFramefalse
\newif\ifOverFrame \OverFramefalse
\newif\ifUnderFrame \UnderFramefalse
\def\BOXTHEFRAME#1{%
   \hbox{%
      \ifBoxFrame
         \frame{#1}%
      \else
         {#1}%
      \fi
   }%
}
\def\IFRAME#1#2#3#4#5#6{%
      \bgroup
      \let\QCTOptA\empty
      \let\QCTOptB\empty
      \let\QCBOptA\empty
      \let\QCBOptB\empty
      #6%
      \parindent=0pt%
      \leftskip=0pt
      \rightskip=0pt
      \setbox0 = \hbox{\QCBOptA}%
      \@tempdima = #1\relax
      \ifOverFrame
          \typeout{This is not implemented yet}%
          \show\HELP
      \else
         \ifdim\wd0>\@tempdima
            \advance\@tempdima by \@tempdima
            \ifdim\wd0 >\@tempdima
               \textwidth=\@tempdima
               \setbox1 =\vbox{%
                  \noindent\hbox to \@tempdima{\hfill\GRAPHIC{#5}{#4}{#1}{#2}{#3}\hfill}\\%
                  \noindent\hbox to \@tempdima{\parbox[b]{\@tempdima}{\QCBOptA}}%
               }%
               \wd1=\@tempdima
            \else
               \textwidth=\wd0
               \setbox1 =\vbox{%
                 \noindent\hbox to \wd0{\hfill\GRAPHIC{#5}{#4}{#1}{#2}{#3}\hfill}\\%
                 \noindent\hbox{\QCBOptA}%
               }%
               \wd1=\wd0
            \fi
         \else
            \ifdim\wd0>0pt
              \hsize=\@tempdima
              \setbox1 =\vbox{%
                \unskip\GRAPHIC{#5}{#4}{#1}{#2}{0pt}%
                \break
                \unskip\hbox to \@tempdima{\hfill \QCBOptA\hfill}%
              }%
              \wd1=\@tempdima
           \else
              \hsize=\@tempdima
              \setbox1 =\vbox{%
                \unskip\GRAPHIC{#5}{#4}{#1}{#2}{0pt}%
              }%
              \wd1=\@tempdima
           \fi
         \fi
         \@tempdimb=\ht1
         \advance\@tempdimb by \dp1
         \advance\@tempdimb by -#2%
         \advance\@tempdimb by #3%
         \leavevmode
         \raise -\@tempdimb \hbox{\box1}%
      \fi
      \egroup%
}%
\def\DFRAME#1#2#3#4#5{%
 \begin{center}
     \let\QCTOptA\empty
     \let\QCTOptB\empty
     \let\QCBOptA\empty
     \let\QCBOptB\empty
     \ifOverFrame 
        #5\QCTOptA\par
     \fi
     \GRAPHIC{#4}{#3}{#1}{#2}{\z@}
     \ifUnderFrame 
        \nobreak\par #5\QCBOptA
     \fi
 \end{center}%
 }%
\def\FFRAME#1#2#3#4#5#6#7{%
 \begin{figure}[#1]%
  \let\QCTOptA\empty
  \let\QCTOptB\empty
  \let\QCBOptA\empty
  \let\QCBOptB\empty
  \ifOverFrame
    #4
    \ifx\QCTOptA\empty
    \else
      \ifx\QCTOptB\empty
        \caption{\QCTOptA}%
      \else
        \caption[\QCTOptB]{\QCTOptA}%
      \fi
    \fi
    \ifUnderFrame\else
      \label{#5}%
    \fi
  \else
    \UnderFrametrue%
  \fi
  \begin{center}\GRAPHIC{#7}{#6}{#2}{#3}{\z@}\end{center}%
  \ifUnderFrame
    #4
    \ifx\QCBOptA\empty
      \caption{}%
    \else
      \ifx\QCBOptB\empty
        \caption{\QCBOptA}%
      \else
        \caption[\QCBOptB]{\QCBOptA}%
      \fi
    \fi
    \label{#5}%
  \fi
  \end{figure}%
 }%
\def\makeactives{
  \catcode`\"=\active
  \catcode`\;=\active
  \catcode`\:=\active
  \catcode`\'=\active
  \catcode`\~=\active
}
   \gdef\activesoff{%
      \def"{\string"}
      \def;{\string;}
      \def:{\string:}
      \def'{\string'}
      \def~{\string~}
    }
\def\FRAME#1#2#3#4#5#6#7#8{%
 \bgroup
 \@ifundefined{bbl@deactivate}{}{\activesoff}
 \ifnum\draft=\@ne
   \wasdrafttrue
 \else
   \wasdraftfalse%
 \fi
 \def\LaTeXparams{}%
 \dispkind=\z@
 \def\LaTeXparams{}%
 \doFRAMEparams{#1}%
 \ifnum\dispkind=\z@\IFRAME{#2}{#3}{#4}{#7}{#8}{#5}\else
  \ifnum\dispkind=\@ne\DFRAME{#2}{#3}{#7}{#8}{#5}\else
   \ifnum\dispkind=\tw@
    \edef\@tempa{\noexpand\FFRAME{\LaTeXparams}}%
    \@tempa{#2}{#3}{#5}{#6}{#7}{#8}%
    \fi
   \fi
  \fi
  \ifwasdraft\draft=1\else\draft=0\fi{}%
  \egroup
 }%
\def\TEXUX#1{"texux"}
\def\limfunc#1{\mathop{\rm #1}}%
\long\def\QQQ#1#2{%
     \long\expandafter\def\csname#1\endcsname{#2}}%
\long\def\QQA#1#2{}%
\def\QTR#1#2{{\csname#1\endcsname #2}}
\def\EXPAND#1[#2]#3{}%
\def\NOEXPAND#1[#2]#3{}%
\def\LaTeXparent#1{}%
\def\ChildStyles#1{}%
\def\ChildDefaults#1{}%
\def\QTagDef#1#2#3{}%
\def\QQfnmark#1{\footnotemark}
\def\makeatletter\input gnuindex.sty\makeatother\makeindex{\makeatletter\input gnuindex.sty\makeatother\makeindex}%
\def\initial#1{\bigbreak{\raggedright\large\bf #1}\kern 2\p@\penalty3000}}%
 \def\abstract{%
  \if@twocolumn
   \section*{Abstract (Not appropriate in this style!)}%
   \else \small 
   \begin{center}{\bf Abstract\vspace{-.5em}\vspace{\z@}}\end{center}%
   \quotation 
   \fi
  }%
   \def\registered{\relax\ifmmode{}\r@gistered
                    \else$\m@th\r@gistered$\fi}%
 \def\r@gistered{^{\ooalign
  {\hfil\raise.07ex\hbox{$\scriptstyle\rm\text{R}$}\hfil\crcr
  \mathhexbox20D}}}}{}%
\newdimen\theight
\def\Column{%
 \vadjust{\setbox\z@=\hbox{\scriptsize\quad\quad tcol}%
  \theight=\ht\z@\advance\theight by \dp\z@\advance\theight by \lineskip
  \kern -\theight \vbox to \theight{%
   \rightline{\rlap{\box\z@}}%
   \vss
   }%
  }%
 }%
\def\qed{%
 \ifhmode\unskip\nobreak\fi\ifmmode\ifinner\else\hskip5\p@\fi\fi
 \hbox{\hskip5\p@\vrule width4\p@ height6\p@ depth1.5\p@\hskip\p@}%
 }%
\def\miss{\hbox{\vrule height2\p@ width 2\p@ depth\z@}}%
\def\tcol#1{{\baselineskip=6\p@ \vcenter{#1}} \Column}  %
\def\newfmtname{LaTeX2e}
\def\chkcompat{%
   \if@compatibility
   \else
     \usepackage{latexsym}
   \fi
}
  \DeclareOldFontCommand{\rm}{\normalfont\rmfamily}{\mathrm}
  \DeclareOldFontCommand{\sf}{\normalfont\sffamily}{\mathsf}
  \DeclareOldFontCommand{\tt}{\normalfont\ttfamily}{\mathtt}
  \DeclareOldFontCommand{\bf}{\normalfont\bfseries}{\mathbf}
  \DeclareOldFontCommand{\it}{\normalfont\itshape}{\mathit}
  \DeclareOldFontCommand{\sl}{\normalfont\slshape}{\@nomath\sl}
  \DeclareOldFontCommand{\sc}{\normalfont\scshape}{\@nomath\sc}
\def\alpha{{\Greekmath 010B}}%
\def\beta{{\Greekmath 010C}}%
\def\gamma{{\Greekmath 010D}}%
\def\delta{{\Greekmath 010E}}%
\def\epsilon{{\Greekmath 010F}}%
\def\zeta{{\Greekmath 0110}}%
\def\eta{{\Greekmath 0111}}%
\def\theta{{\Greekmath 0112}}%
\def\iota{{\Greekmath 0113}}%
\def\kappa{{\Greekmath 0114}}%
\def\lambda{{\Greekmath 0115}}%
\def\mu{{\Greekmath 0116}}%
\def\nu{{\Greekmath 0117}}%
\def\xi{{\Greekmath 0118}}%
\def\pi{{\Greekmath 0119}}%
\def\rho{{\Greekmath 011A}}%
\def\sigma{{\Greekmath 011B}}%
\def\tau{{\Greekmath 011C}}%
\def\upsilon{{\Greekmath 011D}}%
\def\phi{{\Greekmath 011E}}%
\def\chi{{\Greekmath 011F}}%
\def\psi{{\Greekmath 0120}}%
\def\omega{{\Greekmath 0121}}%
\def\varepsilon{{\Greekmath 0122}}%
\def\vartheta{{\Greekmath 0123}}%
\def\varpi{{\Greekmath 0124}}%
\def\varrho{{\Greekmath 0125}}%
\def\varsigma{{\Greekmath 0126}}%
\def\varphi{{\Greekmath 0127}}%
\def\nabla{{\Greekmath 0272}}
\def\FindBoldGroup{%
   {\setbox0=\hbox{$\mathbf{x\global\edef\theboldgroup{\the\mathgroup}}$}}%
}
\def\Greekmath#1#2#3#4{%
    \if@compatibility
        \ifnum\mathgroup=\symbold
           \mathchoice{\mbox{\boldmath$\displaystyle\mathchar"#1#2#3#4$}}%
                      {\mbox{\boldmath$\textstyle\mathchar"#1#2#3#4$}}%
                      {\mbox{\boldmath$\scriptstyle\mathchar"#1#2#3#4$}}%
                      {\mbox{\boldmath$\scriptscriptstyle\mathchar"#1#2#3#4$}}%
        \else
           \mathchar"#1#2#3#4%
        \fi 
    \else 
        \FindBoldGroup
        \ifnum\mathgroup=\theboldgroup 
           \mathchoice{\mbox{\boldmath$\displaystyle\mathchar"#1#2#3#4$}}%
                      {\mbox{\boldmath$\textstyle\mathchar"#1#2#3#4$}}%
                      {\mbox{\boldmath$\scriptstyle\mathchar"#1#2#3#4$}}%
                      {\mbox{\boldmath$\scriptscriptstyle\mathchar"#1#2#3#4$}}%
        \else
           \mathchar"#1#2#3#4%
        \fi     	    
	  \fi}
\newif\ifGreekBold  \GreekBoldfalse
\let\SAVEPBF=\pbf
\def\pbf{\GreekBoldtrue\SAVEPBF}%
  \newcounter{equationnumber}  
  \def\mathletters{%
     \addtocounter{equation}{1}
     \edef\@currentlabel{\theequation}%
     \setcounter{equationnumber}{\c@equation}
     \setcounter{equation}{0}%
     \edef\theequation{\@currentlabel\noexpand\alph{equation}}%
  }
    \def\BibTeX{{\rm B\kern-.05em{\sc i\kern-.025em b}\kern-.08em
                 T\kern-.1667em\lower.7ex\hbox{E}\kern-.125emX}}}{}%
\def\AmS{{\protect\usefont{OMS}{cmsy}{m}{n}%
                A\kern-.1667em\lower.5ex\hbox{M}\kern-.125emS}}}{}%
\def\DN@{\def\next@}%
\def\eat@#1{}%
\let\DOTSI\relax
\def\RIfM@{\relax\ifmmode}%
\def\FN@{\futurelet\next}%
\def\iint{\DOTSI\intno@\tw@\FN@\ints@}%
\def\iiint{\DOTSI\intno@\thr@@\FN@\ints@}%
\def\iiiint{\DOTSI\intno@4 \FN@\ints@}%
\def\idotsint{\DOTSI\intno@\z@\FN@\ints@}%
\def\ints@{\findlimits@\ints@@}%
\newif\iflimtoken@
\newif\iflimits@
\def\findlimits@{\limtoken@true\ifx\next\limits\limits@true
 \else\ifx\next\nolimits\limits@false\else
 \limtoken@false\ifx\ilimits@\nolimits\limits@false\else
 \ifinner\limits@false\else\limits@true\fi\fi\fi\fi}%
\def\multint@{\int\ifnum\intno@=\z@\intdots@                          
 \else\intkern@\fi                                                    
 \ifnum\intno@>\tw@\int\intkern@\fi                                   
 \ifnum\intno@>\thr@@\int\intkern@\fi                                 
 \int}
\def\multintlimits@{\intop\ifnum\intno@=\z@\intdots@\else\intkern@\fi
 \ifnum\intno@>\tw@\intop\intkern@\fi
 \ifnum\intno@>\thr@@\intop\intkern@\fi\intop}%
\def\intic@{%
    \mathchoice{\hskip.5em}{\hskip.4em}{\hskip.4em}{\hskip.4em}}%
\def\negintic@{\mathchoice
 {\hskip-.5em}{\hskip-.4em}{\hskip-.4em}{\hskip-.4em}}%
\def\ints@@{\iflimtoken@                                              
 \def\ints@@@{\iflimits@\negintic@
   \mathop{\intic@\multintlimits@}\limits                             
  \else\multint@\nolimits\fi                                          
  \eat@}
 \else                                                                
 \def\ints@@@{\iflimits@\negintic@
  \mathop{\intic@\multintlimits@}\limits\else
  \multint@\nolimits\fi}\fi\ints@@@}%
\def\intkern@{\mathchoice{\!\!\!}{\!\!}{\!\!}{\!\!}}%
\def\plaincdots@{\mathinner{\cdotp\cdotp\cdotp}}%
\def\intdots@{\mathchoice{\plaincdots@}%
 {{\cdotp}\mkern1.5mu{\cdotp}\mkern1.5mu{\cdotp}}%
 {{\cdotp}\mkern1mu{\cdotp}\mkern1mu{\cdotp}}%
 {{\cdotp}\mkern1mu{\cdotp}\mkern1mu{\cdotp}}}%
\def\RIfM@{\relax\protect\ifmmode}
\def\text{\RIfM@\expandafter\text@\else\expandafter\mbox\fi}
\let\nfss@text\text
\def\text@#1{\mathchoice
   {\textdef@\displaystyle\f@size{#1}}%
   {\textdef@\textstyle\tf@size{\firstchoice@false #1}}%
   {\textdef@\textstyle\sf@size{\firstchoice@false #1}}%
   {\textdef@\textstyle \ssf@size{\firstchoice@false #1}}%
   \glb@settings}
\def\textdef@#1#2#3{\hbox{{%
                    \everymath{#1}%
                    \let\f@size#2\selectfont
                    #3}}}
\newif\iffirstchoice@
\def\Let@{\relax\iffalse{\fi\let\\=\cr\iffalse}\fi}%
\def\vspace@{\def\vspace##1{\crcr\noalign{\vskip##1\relax}}}%
\def\multilimits@{\bgroup\vspace@\Let@
 \baselineskip\fontdimen10 \scriptfont\tw@
 \advance\baselineskip\fontdimen12 \scriptfont\tw@
 \lineskip\thr@@\fontdimen8 \scriptfont\thr@@
 \lineskiplimit\lineskip
 \vbox\bgroup\ialign\bgroup\hfil$\m@th\scriptstyle{##}$\hfil\crcr}%
\def\Sb{_\multilimits@}%
\def\endSb{\crcr\egroup\egroup\egroup}%
\def\Sp{^\multilimits@}%
\newdimen\ex@
\def\rightarrowfill@#1{$#1\m@th\mathord-\mkern-6mu\cleaders
 \hbox{$#1\mkern-2mu\mathord-\mkern-2mu$}\hfill
 \mkern-6mu\mathord\rightarrow$}%
\def\leftarrowfill@#1{$#1\m@th\mathord\leftarrow\mkern-6mu\cleaders
 \hbox{$#1\mkern-2mu\mathord-\mkern-2mu$}\hfill\mkern-6mu\mathord-$}%
\def\leftrightarrowfill@#1{$#1\m@th\mathord\leftarrow
\mkern-6mu\cleaders
 \hbox{$#1\mkern-2mu\mathord-\mkern-2mu$}\hfill
 \mkern-6mu\mathord\rightarrow$}%
\def\overrightarrow{\mathpalette\overrightarrow@}%
\def\overrightarrow@#1#2{\vbox{\ialign{##\crcr\rightarrowfill@#1\crcr
 \noalign{\kern-\ex@\nointerlineskip}$\m@th\hfil#1#2\hfil$\crcr}}}%
\def\overleftarrow{\mathpalette\overleftarrow@}%
\def\overleftarrow@#1#2{\vbox{\ialign{##\crcr\leftarrowfill@#1\crcr
 \noalign{\kern-\ex@\nointerlineskip}$\m@th\hfil#1#2\hfil$\crcr}}}%
\def\overleftrightarrow{\mathpalette\overleftrightarrow@}%
\def\overleftrightarrow@#1#2{\vbox{\ialign{##\crcr
   \leftrightarrowfill@#1\crcr
 \noalign{\kern-\ex@\nointerlineskip}$\m@th\hfil#1#2\hfil$\crcr}}}%
\def\underrightarrow{\mathpalette\underrightarrow@}%
\def\underrightarrow@#1#2{\vtop{\ialign{##\crcr$\m@th\hfil#1#2\hfil
  $\crcr\noalign{\nointerlineskip}\rightarrowfill@#1\crcr}}}%
\def\underleftarrow{\mathpalette\underleftarrow@}%
\def\underleftarrow@#1#2{\vtop{\ialign{##\crcr$\m@th\hfil#1#2\hfil
  $\crcr\noalign{\nointerlineskip}\leftarrowfill@#1\crcr}}}%
\def\underleftrightarrow{\mathpalette\underleftrightarrow@}%
\def\underleftrightarrow@#1#2{\vtop{\ialign{##\crcr$\m@th
  \hfil#1#2\hfil$\crcr
 \noalign{\nointerlineskip}\leftrightarrowfill@#1\crcr}}}%
\def\qopnamewl@#1{\mathop{\operator@font#1}\nlimits@}
\let\nlimits@\displaylimits
\def\setboxz@h{\setbox\z@\hbox}
\def\varlim@#1#2{\mathop{\vtop{\ialign{##\crcr
 \hfil$#1\m@th\operator@font lim$\hfil\crcr
 \noalign{\nointerlineskip}#2#1\crcr
 \noalign{\nointerlineskip\kern-\ex@}\crcr}}}}
 \def\rightarrowfill@#1{\m@th\setboxz@h{$#1-$}\ht\z@\z@
  $#1\copy\z@\mkern-6mu\cleaders
  \hbox{$#1\mkern-2mu\box\z@\mkern-2mu$}\hfill
  \mkern-6mu\mathord\rightarrow$}
\def\leftarrowfill@#1{\m@th\setboxz@h{$#1-$}\ht\z@\z@
  $#1\mathord\leftarrow\mkern-6mu\cleaders
  \hbox{$#1\mkern-2mu\copy\z@\mkern-2mu$}\hfill
  \mkern-6mu\box\z@$}
\def\projlim{\qopnamewl@{proj\,lim}}
\def\injlim{\qopnamewl@{inj\,lim}}
\def\varinjlim{\mathpalette\varlim@\rightarrowfill@}
\def\varprojlim{\mathpalette\varlim@\leftarrowfill@}
\def\varliminf{\mathpalette\varliminf@{}}
\def\varliminf@#1{\mathop{\underline{\vrule\@depth.2\ex@\@width\z@
   \hbox{$#1\m@th\operator@font lim$}}}}
\def\varlimsup{\mathpalette\varlimsup@{}}
\def\varlimsup@#1{\mathop{\overline
  {\hbox{$#1\m@th\operator@font lim$}}}}
\def\binom#1#2{{#1 \choose #2}}%
\def\align{\@verbatim \frenchspacing\@vobeyspaces \@alignverbatim
You are using the "align" environment in a style in which it is not defined.}
\let\csname endalign*\endcsname =\endtrivlist
\def\alignat{\@verbatim \frenchspacing\@vobeyspaces \@alignatverbatim
You are using the "alignat" environment in a style in which it is not defined.}
\let\csname endalignat*\endcsname =\endtrivlist
\def\xalignat{\@verbatim \frenchspacing\@vobeyspaces \@xalignatverbatim
You are using the "xalignat" environment in a style in which it is not defined.}
\let\csname endxalignat*\endcsname =\endtrivlist
\def\gather{\@verbatim \frenchspacing\@vobeyspaces \@gatherverbatim
You are using the "gather" environment in a style in which it is not defined.}
\let\csname endgather*\endcsname =\endtrivlist
\def\multiline{\@verbatim \frenchspacing\@vobeyspaces \@multilineverbatim
You are using the "multiline" environment in a style in which it is not defined.}
\let\csname endmultiline*\endcsname =\endtrivlist
\def\arrax{\@verbatim \frenchspacing\@vobeyspaces \@arraxverbatim
You are using a type of "array" construct that is only allowed in AmS-LaTeX.}
\def\tabulax{\@verbatim \frenchspacing\@vobeyspaces \@tabulaxverbatim
You are using a type of "tabular" construct that is only allowed in AmS-LaTeX.}
\let\csname endarrax*\endcsname =\endtrivlist
\let\csname endtabulax*\endcsname =\endtrivlist
\def\@@eqncr{\let\@tempa\relax
    \ifcase\@eqcnt \def\@tempa{& & &}\or \def\@tempa{& &}%
      \else \def\@tempa{&}\fi
     \@tempa
     \if@eqnsw
        \iftag@
           \@taggnum
        \else
           \@eqnnum\stepcounter{equation}%
        \fi
     \fi
     \global\tag@false
     \global\@eqnswtrue
     \global\@eqcnt\z@\cr}
 \def\endequation{%
     \ifmmode\ifinner 
      \iftag@
        \addtocounter{equation}{-1} 
        $\hfil
           \displaywidth\linewidth\@taggnum\egroup \endtrivlist
        \global\tag@false
        \global\@ignoretrue   
      \else
        $\hfil
           \displaywidth\linewidth\@eqnnum\egroup \endtrivlist
        \global\tag@false
        \global\@ignoretrue 
      \fi
     \else   
      \iftag@
        \addtocounter{equation}{-1} 
        \eqno \hbox{\@taggnum}
        \global\tag@false%
        $$\global\@ignoretrue
      \else
        \eqno \hbox{\@eqnnum}
        $$\global\@ignoretrue
      \fi
     \fi\fi
 } 
 \newif\iftag@ \tag@false
 \def\tag{\@ifnextchar*{\@tagstar}{\@tag}}
 \def\@tag#1{%
     \global\tag@true
     \global\def\@taggnum{(#1)}}
 \def\@tagstar*#1{%
     \global\tag@true
     \global\def\@taggnum{#1}%
}
\title{Memory Allocation in Distributed Storage Networks}
\author{Mohsen Sardari$^\dag$, Ricardo Restrepo$^\ddag$, Faramarz
Fekri$^\dagger$, Emina Soljanin$^\ast$ \\ $^\dag$School of Electrical and
Computer Engineering, Georgia Institute of Technology, Atlanta, GA 30332\\
$^\ddag$School of Mathematics, Georgia Institute of
Technology, Atlanta, GA 30332\\
$^\ast$Alcatel-Lucent Bell-Labs, Murray Hill, NJ 07974 \\
\texttt{Email:}$^\dag$\{mohsen.sardari, fekri\}@ece.gatech.edu,
$^\ddag$restrepo@math.gatech.edu, $^\ast$emina@research.bell-labs.com}
\begin{document}
\maketitle

\begin{abstract}

We consider the problem of distributing a file in a network of storage nodes
whose storage budget is limited but at least equals the size file. We first
generate $T$ encoded symbols (from the file) which are then distributed among the
nodes. We investigate the optimal allocation of $T$ encoded packets to the
storage nodes such that the probability of reconstructing the file by using any
$r$ out of $n$ nodes is maximized. Since the optimal allocation of encoded
packets is difficult to find in general, we find another objective function which
well approximates the original problem and yet is easier to optimize. We find the
optimal symmetric allocation for all coding redundancy constraints using the
equivalent approximate problem. We also investigate the optimal allocation in
random graphs. Finally, we provide simulations to verify the theoretical results.
\end{abstract}

\section{Introduction}
A file in a distributed storage network can be replicated throughout the network
to improve the performance of retrieval process, measured by routing efficiency,
persistence of the file in the network when some storage locations go out of
service, and many other criteria. Most of the studies in network file storage
consider a common practice where every node in the network either stores the
entire file or none of it. In an important article, Naor and Roth~\cite{Naor1991}
studied how to store a file in a network such that every node can recover the
file by accessing only the portions of the file stored on itself and its
neighbors, with the objective of minimizing the total amount of data stored. By
applying MDS (Maximum Distance Separable) codes and generating codeword symbols
of the file, they presented a solution that is asymptotically optimal in
minimizing the total number of stored bits, when the original file has a length
much larger than the logarithm of the graph's degree of the storage network.
Other works~\cite{Jiang2003,Jiang2005} extended the result of~\cite{Naor1991} and
devised algorithms for memory allocation in tree networks with heterogeneous
clients. Distributed storage is also studied in sensor
networks~\cite{Zhenning2009,Dimakis2005}. In sensor networks, the focus is
usually on the data retrieval assuming that a data collector has access to a random
subset of storage nodes while in this paper we address the allocation problem.

One of the appealing features for a distributed storage system is the ability to
scale the persistence of data arbitrarily up and down on-demand. In other words,
the cost of accessing the stored data should be adjustable based on the demand.
In one extreme, all the nodes have ``easy'' access to the stored file, either by
storing the whole file or a large part of it. On the other extreme, just a single
node stores the file entirely and other nodes need to fetch the file from that
node. It is clear that by making more copies of a file and spreading those copies
in the network, the retrieval of the file becomes easier. The use of MDS codes
provides the flexibility to increase the persistence of a file gradually. For
example, for a given file of size $F$, we can generate $T$ symbols using a
$(T,F)$ MDS code such that every $F$-subset of those $T$ symbols is sufficient to
reconstruct the original file. We call $T$ the budget considered for the file.
Now, the question is as to how increasing the budget of a file affects the
retrieval process. In order to answer this question, we need to consider a model
for data retrieval. Recently, Leong et. al.~\cite{Leong2009a} investigated this
problem and introduced the following model for the network. Consider a network
with $n$ storage nodes. We distribute a file of size $F$ and budget $T$ (packets
or symbols) among these storage nodes. Then, we look at all the possible subsets
of size $r$ of the storage nodes. We say that a specific $r$-subset is successful
in recovering the file if the total number of packets stored in that subset of
the nodes is at least the file size $F$. We are to find the best assignment of
these $T$ symbols to $n$ storage nodes such that the maximum number of the
$r$-subsets of storage nodes have enough number of symbols to reconstruct the
file. The rational behind the model is that in a real storage network, every node
can be reached by all the other nodes in network. Once a retrieval request for a
file is received by a node in network, the node tries to fetch all the parts of
the file and respond to the request. The cost of fetching the parts from
different nodes is not equal (other nodes may be down, busy, etc.). Therefore, in
the model we assume that each node fetches the necessary parts of the file from
the other $r-1$ most accessible nodes.

In general, this problem is quite challenging and the optimal allocation is
non-trivial. In~\cite{Leong2009a}, the authors provide some results for the
symmetric allocation and probability-1 recovery regime which is a special case of
the problem introduced in~\cite{Naor1991}. Symmetric allocation refers to a
scheme where, based on the budget, we split the storage nodes into two groups:
the nodes with no stored symbols and the nodes that store the same number of
symbols. In probability-1 recovery regime, all the nodes should be able to
reconstruct the file. As illustrated in~\cite{Leong2009a}, the optimal allocation
is not obvious even if we only consider the symmetric allocations.

For very low budgets, we observe that the budget is concentrated over a minimal
subset of storage nodes in the optimal allocation. On the other hand, for high
budget levels, we observe a maximal spread of budget over storage nodes. It is of
interest to determine as to how this transition occurs and also to study the
behavior of the optimal allocation versus budget. In this paper, we take the
initial steps towards the characterization of the optimal allocation. In
section~\ref{sec:File Allocation Problem}, we give the formal definition of the
problem and the model we consider. Then, in Section~\ref{sec:proofresult} we
prove that an easier to solve problem well approximates the original problem.
Using the alternative approach, we solve the file allocation problem for
symmetric allocations (Section~\ref{sec:optimalsymmetric}); we also consider
symmetric allocations in random graphs. Finally, simulation results are provided
in Section~\ref{sec:simulation}.


\section{File Allocation Problem}
\label{sec:File Allocation Problem}

\subsection{Problem Statement}
\label{subsec:Problem Statement}

We are given a file of size $F$ and a network with budget $T$. We generate $T$
redundant symbols using a $(T,F)$ MDS code. An allocation of $T$ symbols to $n$
nodes is defined to be a partition of $T$ into $n$ sets of sizes $x_{1},\ldots
,x_{n}$, where $x_{i}$ is the number of symbols allocated to the $i$th storage
node. Note that $\sum x_{i} = T$ and $x_{i}\geq 0~$for $i=1,\ldots ,n$. Our goal
is to find an allocation which maximizes the number of $r$-subsets jointly
storing $F$ or more packets.

Combination networks provide a simple illustration of the allocation problem
under study. As shown in Figure~\ref{fig:comb_net}, \begin{figure}[htp]
\begin{center}
  \includegraphics[width=2.3in]{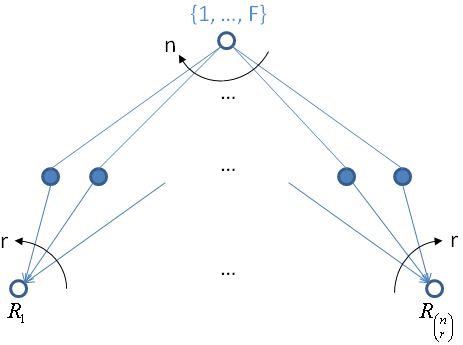}
  \vspace{-.1in}
  \caption{Combination Network. Virtual source node in layer one has a file of
  size $F$. The solid nodes in layer two represent the storage nodes in network.
  The third layer contains virtual receiver nodes. Each receiver node
  corresponds to an $r$-subset of the storage nodes.}
  \label{fig:comb_net}
  \vspace{-.15in}
\end{center}
\end{figure}
there are three layers of nodes. A virtual source node in layer 1 has a file
of size $F$ (packets) to be distributed among storage nodes in layer 2. There are
$n$ storage nodes in layer two which represent the actual storage nodes in the
storage network. Attributed to the file is a budget $T$. The data retrieval
phase is visualized in the third layer of the combination network, which contains
$\binom{n}{r}$ virtual receiver nodes. Each receiver node corresponds to an
$r$-subset of the storage nodes. We are going to find the best allocation of the
budget $T$ such that the maximum number of receivers $R$ in the combination
network can reconstruct the file. Indeed, the success of the recovery process depends on
the budget $T$. Based on the illustration in Figure~\ref{fig:comb_net}, we use
the terms receiver and $r$-subset interchangeably.

We use these notations throughout the paper:

\begin{itemize}
\item[-] $[m] = \{ 1,\ldots, m \} $ and $[m] _\ast = \{0,1,\ldots,m\}
$

\item[-] $A^{r}=\{ ( s_{1},\ldots ,s_{r}) :s_{i}\in
A\} $. Note that there is no limit on the number of times an element
$s_k$ in set $A$ can be chosen in $(s_1,\ldots,s_r)$.

\item[-] $A^{[r]}=\{ [ s_{1},\ldots ,s_{r}]
:s_{i}\in A\text{ and }s_{i}\neq s_{j}\text{ for }i\neq j\} $. In other words,
$A^{[r]}$ is the set of ordered vectors with distinct elements.

\item[-] $\mathbf{d} _{<F}^{u}(\cdot)$ is an operator on polynomials which
truncates to the terms of degree less than $F$ with respect to $u$.
\end{itemize}
Furthermore, we use the notation $\mathbb{I}$ for indicator function, defined as
\[ \mathbb{I}_{(\omega\in\Omega)} = \left\{
 \begin{array}{rll}
  1 &  \text{if} & \omega\in\Omega\\
  0 &  \text{if} & \omega\notin\Omega
 \end{array} \right.
.\]

For an allocation $\left( x_{1},\ldots ,x_{n}\right) $ of $T$ symbols, let $\Psi
\left( x_{1},\ldots ,x_{n}\right)$ count the number of unsuccessful receivers. We
can write $\Psi$ as
\begin{equation*}
\Psi \left( x_{1},\ldots ,x_{n}\right) :=\sum_{\substack{ S\subseteq [n]  \\ |S| =r}}\mathbb{I}\left( \tsum_{i\in
S}x_{i}<F\right)
,\end{equation*}
where the first sum is over all the subsets of size $r$ of storage nodes.
Therefore, the allocation problem we consider is the following optimization problem:
\begin{equation*}
\begin{array}{ll}
\text{minimize	}			& \Psi \left( x_{1},\ldots ,x_{n}\right) \\
\text{subject to}			& \sum_{i=1}^n x_{i} = T \\
							& x_i \geq 0, \quad x_i \text{ integer}
\end{array}
.\end{equation*}
It is challenging to find the optimal allocation because of the large space of
possible allocations, non-convexity, and discontinuity of the indicator
function. Our approach for solving this problem is to look into another
quantity which, for $r \ll \sqrt{n}$, closely approximates
$\Psi$ but it is easier to compute.
\subsection{Main Result}
\label{subsec:result}

Let $\alpha _{k}$ be the fraction of nodes containing $k$ symbols, and let
$c:=T/n$. The set of constraints on admissible allocation with respect to
$\alpha$ can be re-written as \[\left\{ \alpha :\tsum_{i=0}^{F}\alpha _{i}=1
\text{ and }\tsum_{i=0}^{F}i\alpha _{i} = c\text{ }\right\} .\] Given an
allocation $(x_1,\ldots,x_n)$, we can compute the parameters
$\alpha_0,\ldots,\alpha_F$. Then, we define $\varphi ( \alpha _{0},\ldots ,\alpha
_{F})$ as the probability that a receiver with access to a uniformly
chosen subset of nodes $\boldsymbol{s}$ from $[n]^r$ (shown by
$\boldsymbol{s}\sim [n]^r$) is unsuccessful in recovering the file. We have
\begin{equation}
\label{eq:phi_expression}
\varphi ( \alpha _{0},\ldots ,\alpha _{F})
:=\mathbf{P}_{\mathbf{s\sim }[n] ^{r}}\left( \dsum_{i=1}^{r}x_{
\boldsymbol{s}_{i}}<F \mid \alpha \right)
.\end{equation}
Our first claim says that $\varphi$ is a good approximation for $\Psi$.
\begin{theorem}
\label{thm:mainresult}
\begin{equation*}
\inf_{\alpha}\varphi ( \alpha) -\frac{2(r-1)^2}{n}\leq
\frac{r!}{n^{r}} \inf_{x\in \text{Allocations}}\Psi ( x) \leq
\inf_{\alpha}\varphi ( \alpha )
.\end{equation*}
\end{theorem}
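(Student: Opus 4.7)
The plan is to establish, for every allocation $x$ with profile $\alpha(x)$, the pointwise two-sided comparison
\[
\tfrac{r!}{n^{r}}\Psi(x)\;\leq\;\varphi(\alpha(x))\;\leq\;\tfrac{r!}{n^{r}}\Psi(x)+\tfrac{2(r-1)^{2}}{n},
\]
and then pass to infima over allocations. The essential discrepancy between $\Psi$ (which counts unordered $r$-subsets of $[n]$) and $\varphi$ (which averages over ordered tuples in $[n]^{r}$ drawn with replacement) lives entirely on the tuples whose indices repeat, and this is exactly what the $\tfrac{2(r-1)^{2}}{n}$ slack is there to absorb.

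First I would unfold the definition of $\varphi$ as a uniform average of $\mathbb{I}(\sum_{i}x_{s_{i}}<F)$ over $\boldsymbol{s}\in[n]^{r}$, and split the index set into the distinct ordered tuples $[n]^{[r]}$ and the remainder. Since every unordered $r$-subset $S\subseteq[n]$ is represented by exactly $r!$ orderings in $[n]^{[r]}$, the distinct part contributes exactly $\tfrac{r!}{n^{r}}\Psi(x)$ to $\varphi(\alpha(x))$; this already yields the right-hand inequality of the display, because the remaining (repeated-index) tuples contribute a nonnegative quantity. For the left-hand inequality I would bound that same contribution from above by its cardinality divided by $n^{r}$, namely
\[
\frac{n^{r}-r!\binom{n}{r}}{n^{r}}\;=\;1-\prod_{i=0}^{r-1}\Big(1-\tfrac{i}{n}\Big)\;\leq\;\sum_{i=0}^{r-1}\tfrac{i}{n}\;=\;\tfrac{r(r-1)}{2n},
\]
where the middle inequality is the standard Weierstrass product bound $\prod(1-t_{i})\geq 1-\sum t_{i}$ for $t_{i}\in[0,1]$. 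Finally, $\tfrac{r(r-1)}{2n}\leq \tfrac{2(r-1)^{2}}{n}$ for $r\geq 2$, completing the pointwise bound. Taking $\inf_{x}$ through both sides, and using that $\varphi$ depends on $x$ only via $\alpha(x)$ so that $\inf_{x}\varphi(\alpha(x))=\inf_{\alpha}\varphi(\alpha)$ over the admissible profiles, gives the theorem.

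The main obstacle is really only bookkeeping: organizing the distinct-versus-repeated decomposition of $[n]^{r}$ so that the $r!$ overcount lines up exactly with $\Psi$, and then producing an estimate of the repeated part that is loose enough to admit a one-line Weierstrass argument while still being small compared to the quantities of interest. A mild subtlety I would want to confirm is the precise domain of $\inf_{\alpha}$: if $\alpha$ is restricted to profiles realized by honest allocations, the argument above closes directly; if instead $\alpha$ ranges over the continuous polytope $\{\alpha:\sum\alpha_{i}=1,\ \sum i\,\alpha_{i}=c\}$, one additional rounding step is needed to pass from a continuous minimizer to the nearest allocation-realizable profile, but the resulting error is again of order $r^{2}/n$ and can be folded into the same $\tfrac{2(r-1)^{2}}{n}$ slack.
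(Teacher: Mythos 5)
Your proposal is correct, and for the harder half it takes a cleaner route than the paper. The first half (the pointwise bound $\tfrac{r!}{n^{r}}\Psi(x)\le\varphi(\alpha(x))$, giving the right-hand inequality of the theorem) is the same as the paper's Lemma~\ref{lem:lower bound}, which likewise treats it as immediate from the two definitions. For the other half, the paper introduces the total variation between the uniform distribution on $[n]^{r}$ and the uniform distribution on $[n]^{[r]}$, bounds it by $\tfrac{(r-1)^{2}}{n}$ (Lemma~\ref{lem:TV}), and then pays the triangle inequality in Lemma~\ref{lem:upper bound}, which is where the factor $2$ in $\tfrac{2(r-1)^{2}}{n}$ comes from. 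You instead split $[n]^{r}$ into distinct and repeated-index tuples, observe that the distinct part is exactly $\tfrac{r!}{n^{r}}\Psi(x)$, and bound the repeated part by its relative count $1-\tfrac{n^{[r]}}{n^{r}}\le\sum_{i=0}^{r-1}\tfrac{i}{n}=\tfrac{r(r-1)}{2n}$ via the Weierstrass product inequality. This avoids the total-variation machinery entirely and yields the sharper slack $\tfrac{r(r-1)}{2n}$ (which you then relax to $\tfrac{2(r-1)^{2}}{n}$, valid since $r\le 4(r-1)$ for $r\ge2$ and the case $r=1$ is trivial); it is in fact closer in spirit to the paper's appendix, whose generating-function argument also produces a $\binom{r}{2}n^{r-1}$ correction, i.e.\ the same $\tfrac{r(r-1)}{2n}$ slack, but your version is more elementary. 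Two small remarks: your phrase ``right-hand inequality of the display'' reads most naturally as referring to the theorem's display (the nonnegativity argument proves the lower bound on $\varphi$, hence the theorem's right-hand inequality), which is what you intend; and your caveat about whether $\inf_{\alpha}$ ranges over allocation-induced profiles or the continuous polytope is well taken, but the paper's own proof makes the same implicit restriction to profiles realized by allocations, so your argument is at no disadvantage there.
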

\begin{proof}
Proof is given in Sec.~\ref{sec:proofresult}.
\end{proof}
Dealing with the functional $\varphi$ is simpler than working with $\Psi$. In the
definition of $\varphi$, the random vector is chosen from $[n]^r$ where
repetition is allowed. As a result, the probability generating function of
$\varphi$ has a simple form and is easy to work with. The main result of the
Theorem~\ref{thm:mainresult} is that for $r \ll
\sqrt{n}$, we can solve the problem of minimizing $\varphi( \alpha _{0},\ldots
,\alpha_{F})$ instead, which is simpler than solving for the original
optimization problem. Moreover, this solution is also a good approximation of
the problem. We will further discuss the discrepancy in the optimal solution
through an example in the last section.

From this point on, we will drop the conditioning on $\alpha$ for
brevity. Please note that $\varphi$ is just a function of $ ( \alpha _{0},\ldots
,\alpha _{F})$ and its value remains the same for all allocations with the same $
( \alpha _{0},\ldots ,\alpha _{F})$.

\section{Discussions and Proof of the Main Result}
\label{sec:proofresult}
Consider a receiver which has access to the vector of storage nodes
$\boldsymbol{s}=[ \boldsymbol{s}_1,\ldots ,\boldsymbol{s} _r] $, where
$\boldsymbol{s}$ is uniformly chosen from $[n]^{[r]}$. Let $\mathbf{P}_{\mathbf{
s\sim }[n] ^{[r] }}\bigl[ \sum_{i=1}^{r}x_{ \boldsymbol{s}_{i}}<F\bigr]$
represent the probability that the total number of symbols stored in a randomly
chosen set of size $r$ of storage nodes $\boldsymbol{s}$ is less than the file
size $F$. There are in total $n(n-1)\ldots(n-(r-1))=r!\binom{n}{r}$ ordered
vectors like $\boldsymbol{s}$ in $[n]^{[r]}$ (note that the subsets in
$[n]^{[r]}$ are ordered). Although working with ordered sets is slightly more
complicated, as we will see shortly, this will help us in finding a better
approximation for $\Psi$.

The total number of unsuccessful receivers $\Psi \bigl(x_{1},\ldots ,x_{n}\bigr)
$ can be calculated easily if we have the probability
$\mathbf{P}_{\mathbf{s}\sim[n]^{r}}$ that a receiver with access to a randomly
chosen subset of nodes $\boldsymbol{s}$ is unsuccessful in recovering the file.
In the definition of $\Psi$, we are only concerned with the total number of
unsuccessful receivers. If we choose $\boldsymbol{s}$ from a space like
$[n]^{[r]}$ where order is important, we need to eliminate the effect of
over-counting. Here, since $\boldsymbol{s}\sim[n]^{[r]}$, a division by $r!$ is
sufficient. Hence, the functional $\Psi \bigl(x_{1},\ldots ,x_{n}\bigr) $ can be
re-written as
\begin{equation}
\label{eq:prob-expression}
\Psi \bigl( x_{1},\ldots ,x_{n}\bigr) =\binom{n}{r}\mathbf{P}_{\mathbf{
s\sim }[n]^{[r]}}\bigl[ \tsum_{i=1}^{r}x_{
\boldsymbol{s}_{i}}<F\bigr]
.\end{equation}
In order to prove Theorem~\ref{thm:mainresult}, we first derive the lower
bound on $\varphi$ and then we prove the upper bound in the lemmas below.

\newtheorem{lem1}{Lemma}
\begin{lem1}
\label{lem:lower bound}
For any allocation $(x_1,\ldots , x_n)$, satisfying a given set of $\alpha$'s,
the following hold:
\begin{equation}
\label{eq:2}
 \frac{r!\Psi \bigl( x_{1},\ldots ,x_{n}\bigr) }{n^{r}}  \leq  \varphi
\bigl( \alpha _{0},\ldots ,\alpha _{F}\bigr)
.\end{equation}
\end{lem1}
\begin{proof}[Proof of Lemma~\ref{lem:lower bound}]
The inequality \eqref{eq:2} follows immediately from the definitions of $\varphi$
and $\Psi$ (\eqref{eq:phi_expression} and \eqref{eq:prob-expression}).
\end{proof}

In order to prove the upper bound, we need to look at the total variation
between the distributions of a uniform random vector $\mathbf{s\sim }[n]
^{r}$ and a uniform random vector $\mathbf{s}^{\prime }\mathbf{\sim }[n] ^{[r] }$.
\begin{definition} The total variation of two probability distributions $\mu $ and $\nu $ on a
discrete space $\Omega $ is defined as
\begin{equation*}
\limfunc{TV}( \mu ,\nu ) =\sup_{A\subseteq \Omega }\bigl\vert \mu
\bigl( A\bigr) -\nu \bigl( A\bigr) \bigr\vert \text{.}
\end{equation*}
\end{definition}
A well known integral formula for the total variation between two distributions
is given by \[\limfunc{TV}\bigl( \mu ,\nu \bigr)
=\frac{1}{2}\tsum\limits_{\omega \in \Omega }\bigl\vert \mu \bigl( \omega \bigr) -\nu \bigl( \omega \bigr)
\bigr\vert.\]
\newtheorem{lemTV}[lem1]{Lemma}
\begin{lemTV}
\label{lem:TV}
Let $\Omega=[ n] ^{r}$. Further, let $\mu $ be the uniform
probability distribution over $\Omega $, and $\nu $ be the uniform
probability distribution over the subset $S$ of $\Omega $ consisting of
vectors with distinct entries; $\nu$ is $0$ on $\Omega \setminus S$. Then, we
have \[ \limfunc{TV}(\mu,\nu) \leq \frac{(r-1)^2}{n}\text{.} \]
\end{lemTV}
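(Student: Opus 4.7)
The plan is to evaluate $\limfunc{TV}(\mu,\nu)$ exactly via the integral formula stated just before the lemma, and then to bound the resulting expression.

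First I would record the point masses: $\mu(\omega)=1/n^{r}$ for every $\omega\in\Omega$, while $\nu(\omega)=1/|S|$ for $\omega\in S$, where $|S|=n(n-1)\cdots(n-r+1)$, and $\nu(\omega)=0$ on $\Omega\setminus S$. Since $|S|\leq n^{r}$, one has $\nu(\omega)\geq\mu(\omega)$ on $S$ and $\nu(\omega)<\mu(\omega)$ on its complement, so the absolute value in the sum splits cleanly and the two contributions turn out to be equal. The integral formula then collapses to
\[
\limfunc{TV}(\mu,\nu) \;=\; 1-\frac{|S|}{n^{r}} \;=\; 1-\prod_{i=0}^{r-1}\Bigl(1-\frac{i}{n}\Bigr).
\]

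Next I would invoke the Weierstrass product inequality $\prod_{i}(1-x_{i})\geq 1-\sum_{i}x_{i}$ for $x_{i}\in[0,1]$, which follows by a one-line induction using $(1-a)(1-b)\geq 1-a-b$. Applied with $x_{i}=i/n$ this yields $\limfunc{TV}(\mu,\nu)\leq \sum_{i=0}^{r-1} i/n = r(r-1)/(2n)$. Finally, the elementary inequality $r(r-1)/2\leq (r-1)^{2}$ (trivial at $r=1$ and equivalent to $r\leq 2(r-1)$ for $r\geq 2$) upgrades this to the stated bound $(r-1)^{2}/n$.

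No step here poses a real obstacle; the whole argument is a direct calculation once the absolute value inside the integral formula is resolved by the sign observation above. It is worth noting that the same computation in fact gives the sharper bound $r(r-1)/(2n)$, so the lemma as stated carries a comfortable amount of slack, which should be convenient when it is plugged into the proof of the upper bound in Theorem~\ref{thm:mainresult}.
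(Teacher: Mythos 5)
Your proposal is correct and follows essentially the same route as the paper: both resolve the absolute values by the sign observation, reduce $\limfunc{TV}(\mu,\nu)$ exactly to $1-\frac{n(n-1)\cdots(n-r+1)}{n^{r}}=1-\prod_{i=0}^{r-1}\bigl(1-\tfrac{i}{n}\bigr)$, and then bound this product by an elementary inequality. The only divergence is that last step: the paper lower-bounds the product by $\bigl(1-\tfrac{r-1}{n}\bigr)^{r-1}$ and applies Bernoulli's inequality, whereas your Weierstrass product inequality gives the slightly sharper intermediate bound $\tfrac{r(r-1)}{2n}$ before relaxing to the stated $\tfrac{(r-1)^{2}}{n}$.
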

\begin{proof}[Proof of Lemma~\ref{lem:TV}]
The total number of non-repetitive vectors of size $r$ in $\Omega$ is
$n(n-1)\ldots(n-r+1)$. We use the short hand $n^{[r]}$ for this expression. Then
we can write
\begin{align*}
\limfunc{TV}( \mu ,\nu )  =&\frac{1}{2}\Bigl\{
\tsum_{\omega \in S}\vert \mu ( \omega) -\nu (
\omega ) \vert
+ \Bigl. \tsum_{\omega \notin S}\vert \mu( \omega ) -\nu( \omega ) \vert \Bigr\}  \\
=&\frac{1}{2}\Bigl\{ \tsum_{\omega \in S}\Bigl( \frac{1}{n^{[ r] }}-\frac{1}{n^{r}}\Bigr) +\tsum_{\omega \notin S}\frac{1}{
n^{r}}\Bigr\}   \\
=& 1-\frac{n^{[r] }}{n^{r}}  = 1- \prod_{l=0}^{r-1}\Bigl(1-\frac{l}{n}\Bigr)
< 1-\Bigl(1-\frac{r-1}{n}\Bigr)^{r-1} \\
<&1-\Bigl[1-(r-1).\frac{r-1}{n}\Bigr]
=\frac{(r-1)^2}{n}.
\end{align*}
\end{proof}
\newtheorem{lemUpper}[lem1]{Lemma}
\begin{lemUpper}
\label{lem:upper bound}
\begin{equation}
\label{eq:3}
 \frac{r!\Psi \bigl( x_{1},\ldots ,x_{n}\bigr) }{n^{r}}  \geq \varphi
\bigl( \alpha _{0},\ldots ,\alpha _{F}\bigr) -\frac{2(r-1)^2}{n}
\end{equation}
\end{lemUpper}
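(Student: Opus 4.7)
The plan is to reduce the claimed inequality to two facts already available: the total variation bound from Lemma~\ref{lem:TV}, and the elementary estimate $1-n^{[r]}/n^r \le (r-1)^2/n$ that appears in its proof.

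First I would rewrite both sides of \eqref{eq:3} in terms of the single event
\[
A \;:=\; \Bigl\{\, \mathbf{s}\in[n]^r \,:\, \tsum_{i=1}^{r} x_{\mathbf{s}_i} < F \,\Bigr\}.
\]
Let $\mu$ be uniform on $[n]^r$ and $\nu$ be uniform on $S=[n]^{[r]}$, extended by $0$ outside $S$. Then by definition $\varphi(\alpha_0,\ldots,\alpha_F)=\mu(A)$, while \eqref{eq:prob-expression} together with $r!\binom{n}{r}=n^{[r]}$ gives
\[
\frac{r!\,\Psi(x_1,\ldots,x_n)}{n^r} \;=\; \frac{n^{[r]}}{n^r}\,\nu(A).
\]
So \eqref{eq:3} is equivalent to
\[
\frac{n^{[r]}}{n^r}\,\nu(A) \;\ge\; \mu(A) - \frac{2(r-1)^2}{n}.
\]

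Next I would apply Lemma~\ref{lem:TV}, which tells us $|\mu(A)-\nu(A)|\le (r-1)^2/n$, and the computation inside that proof, which shows $1-n^{[r]}/n^r\le (r-1)^2/n$. Writing $\eta:=(r-1)^2/n$, these give $\nu(A)\ge \mu(A)-\eta$ and $n^{[r]}/n^r\ge 1-\eta$. Then
\[
\frac{n^{[r]}}{n^r}\,\nu(A) \;\ge\; (1-\eta)\bigl(\mu(A)-\eta\bigr) \;\ge\; \mu(A)-\eta-\eta\,\mu(A) \;\ge\; \mu(A)-2\eta,
\]
using $\mu(A)\le 1$ in the last step. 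Substituting back yields exactly \eqref{eq:3}.

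There is no real obstacle here beyond bookkeeping; the two genuinely nontrivial estimates (coupling/total variation between sampling with and without replacement, and the product bound $\prod(1-l/n)\ge 1-(r-1)^2/n$) have already been absorbed into Lemma~\ref{lem:TV}. The only care needed is to route the multiplicative factor $n^{[r]}/n^r$ correctly so that the two $\eta$'s add rather than multiply, which is why I bound $\mu(A)\le 1$ at the end to linearize the product $(1-\eta)(\mu(A)-\eta)$.
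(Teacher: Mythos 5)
Your proof is correct and uses essentially the same ingredients and structure as the paper's: the identity $r!\Psi/n^r=(n^{[r]}/n^r)\,\nu(A)$ from \eqref{eq:prob-expression}, the total variation bound of Lemma~\ref{lem:TV}, and the estimate $1-n^{[r]}/n^r\le (r-1)^2/n$ from its proof. The only cosmetic difference is that the paper combines the two error terms additively via the triangle inequality (and in fact bounds the absolute value, i.e.\ both directions), whereas you multiply the two one-sided bounds and then linearize with $\mu(A)\le 1$; the constants and the underlying argument are identical.
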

\begin{proof}[Proof of Lemma~\ref{lem:upper bound}]
Using the results of Lemma~\ref{lem:TV} and definitions of $\Psi$
and $\varphi$, we can write
\begin{align*}
&\Bigl\vert \frac{r!\Psi \Bigl( x_{1},\ldots ,x_{n}\Bigr) }{n^{r}}-\varphi
\Bigl( \alpha _{0},\ldots ,\alpha _{F}\Bigr) \Bigr\vert  \\
=&\Bigl\vert \frac{n^{[ r] }}{n^{r}}\limfunc{P}\nolimits_{\mathbf{s\sim }
[ n] ^{[ r] }}\Bigl[ \tsum_{i=1}^{r}x_{\boldsymbol{s}
_{i}}<F\Bigr] -\limfunc{P}\nolimits_{\mathbf{s\sim }[ n]
^{r}}\Bigl( \tsum_{i=1}^{r}x_{\boldsymbol{s}_{i}}<F\Bigr) \Bigr\vert  \\
\leq &\Bigl\vert \mathbf{P}_{\mathbf{s\sim }[n] ^{[r] }}
\Bigl[ \tsum_{i=1}^{r}x_{\boldsymbol{s}_{i}}<F\Bigr] -\mathbf{P}
_{\mathbf{s\sim }[n] ^{r}}\Bigl( \tsum_{i=1}^{r}x_{
\boldsymbol{s}_{i}}<F\Bigr) \Bigr\vert  \\
&\quad +\Bigl( 1-\frac{n^{[r] }}{n^{r}}\Bigr)
\leq \frac{2(r-1)^2}{n}
.\end{align*}

The first inequality above follows from the triangle inequality, and the second
from Lemma~\ref{lem:TV}.

\end{proof}

The proof of the Theorem~\ref{thm:mainresult} follows from Lemma~\ref{lem:lower bound}
and~\ref{lem:upper bound}.
%
\section{Optimal Symmetric Allocations}
\label{sec:optimalsymmetric}

Following the results of the previous section, for cases where $r \ll \sqrt{n}$,
we have $\frac{2(r-1)^2}{n}\ll 1$ and therefore, finding the optimal allocation of
the symbols is equivalent to minimizing the function $\varphi ( \alpha
_{0},\ldots ,\alpha _{F})$. In this section, we direct our attention to symmetric
allocations. In the case of symmetric allocations, we can find the optimal
symmetric allocation and probability of success for all different budgets $T$. An
allocation is called symmetric if we allocate the budget $T$ as follows: we pick a
number, say $j$, and we allocate chunks of size $T/j$ until we run out of the
budget. Now, we have two types of nodes: fraction $\alpha_0$ of nodes which are
left empty and the fraction $\alpha_j$ of the nodes which store $j$ number of
symbols.

Again, the optimal allocation is not obvious even if we consider only symmetric
allocations. For instance, for very low budgets ($T\approx F$), we can easily
argue that the budget should be concentrated over a minimal subset of nodes. For
example, consider the case where $T=F$, if we store the entire file over one of
the storage nodes, then the total number of successful receivers is
$\binom{n-1}{r-1}$. If we break the file into two parts each of size $F/2$, then
the total number of successful receivers is going to be $\binom{n-2}{r-2}$. By
using the well-known identity \[\binom{n}{r}=\binom{n-1}{r-1}+\binom{n-1}{r},\]
it is clear that the former allocation outperforms the latter. Similarly, other
symmetric allocations can also be rejected. When the budget is very high ($T
\approx nF/r$), the budget should be spread maximally. For example, consider the
case where $T = nF/r$. In this case, by spreading the budget over all the storage
nodes, we can achieve the probability-1 recovery. If one distributes this budget
by allocating chunks of size $F$ (storing the file in its entirely), he will be
worse-off since the probability of success will be \[1-\binom{n-\lfloor
n/r\rfloor}{r},\] which is clearly less than 1. This behavior gives rise to
questions like: ``When to switch from minimal to maximal spread of the budget?'',
``Is there any situation where there exists a solution other than minimal or
maximal spreading?''

First, we give a useful expression for $\varphi$ in the lemma below, which is
simpler to work with. Then, we investigate the optimal symmetric allocation.
\newtheorem{lem4}[lem1]{Lemma}
\begin{lem4}
\label{lem:phigeneratingfunction}
\begin{equation}
\label{eq:1}
\varphi \left( \alpha_{0},\ldots ,\alpha _{F}\right) = \left[
\mathbf{d} _{<F}^{u}\left( \sum_{k=0}^{F}u^{k}\alpha _{k}\right) ^{r}\right]
_{_{\upharpoonright u=1}}
\end{equation}
\end{lem4}

\begin{proof}[Proof of Lemma~\ref{lem:phigeneratingfunction}]
If $ \boldsymbol{s}_{i}$ is a random element of $[n] $, then the probability that
$\mathbf{P}(x_{\boldsymbol{s}_{i}} = k)$ is equal to $\alpha_k$. Therefore, the
probability generating function of $x_{\boldsymbol{s}_{i}}$ is equal to $
\sum_{k=0}^{F}u^{k}\alpha _{k}$. Hence, if $\boldsymbol{s}=\bigl(
\boldsymbol{s}_{1},\ldots ,\boldsymbol{s}_{r}\bigr) $ is a uniform random vector
in $[n] ^{r}$, then the probability generating function of
$\sum_{i=1}^{r}x_{\boldsymbol{s}_{i}}$ is equal to $\bigl(
\sum_{k=0}^{F}u^{k}\alpha _{k}\bigr) ^{r}$. It follows then that
\begin{equation}
\mathbf{P}_{\mathbf{s\sim }[n] ^{r}}\bigl(
\tsum_{i=1}^{r}x_{\boldsymbol{s}_{i}}<F\bigr) =\left[
\mathbf{d} _{<F}^{u}\bigl( \tsum_{k=0}^{F}u^{k}\alpha _{k}\bigr) ^{r}\right]
_{_{\upharpoonright u=1}} .
\end{equation}
and~\eqref{eq:1} is immediate.
\end{proof}

In a symmetric allocation, suppose that the fraction of the non-empty nodes is
$\alpha_j$ with $j$ number of symbols each. Therefore, in the
expression of $\varphi( \alpha _{0},\ldots ,\alpha_{F})$ at most $\alpha_0$ and
$\alpha_j$ have non-zero values. Our goal is to find the optimal value of $j$.

In this case, using Lemma~\ref{lem:phigeneratingfunction}, the problem of
minimizing $\varphi ( \alpha _{0},\ldots ,\alpha _{F})$ over $\{\alpha_0+\alpha_j
= 1, j\alpha_j = c \}$ reduces to
\begin{equation}
\label{eq:symmetricExpansion}
 \varphi (\alpha_j) = \left[ \mathbf{d} _{<F}^{u} (\alpha_0 + u^j \alpha _j)^{r}
 \right] _{_{\upharpoonright u=1}}
.\end{equation}
Equivalently, by substituting $(1-\alpha_j)$ for $\alpha_0$, we have
\begin{equation}
\label{eq:binomial}
\varphi ( \alpha _j) = \sum^{\lfloor (F-1)/j \rfloor}_{i=0}
\binom{r}{i}\alpha_j^i (1-\alpha_j)^{r-i} \quad \text{for} \quad  rj \geq F.
\end{equation}
Notice that for $rj < F$, the maximum degree of $u$
in~\eqref{eq:symmetricExpansion} is less than $F$. Therefore, the operator
$\mathbf{d}_{<F}^u$ does not eliminate any term from the expansion and
$\varphi(\alpha_j)=1$.

Expression~\eqref{eq:binomial} has the form of the binomial distribution CDF; The
following lemma helps us to determine its minima.
\newtheorem{lem5}[lem1]{Lemma}
\begin{lem5}
\label{lem:local min}
The function $\varphi(\alpha_j)$ in~\eqref{eq:binomial} has a local minimum in
all the points $j$ where $\lfloor \frac{F-1}{j-1} \rfloor-\lfloor
\frac{F-1}{j} \rfloor \geq1$. In other words, $\varphi(\alpha_j)$ minimizes
over some $j^\star$ of the form $\lceil\frac{F-1}{i} \rceil$ for some $i$.
\end{lem5}
\begin{proof}
For constants $m$ and $n$,
$f(x)=\sum_{i=0}^{m}\binom{n}{i}x^{i}\left(1-x\right)^{n-i}$ is decreasing in
$x$. Therefore, if $j_{1}<j_{2}$ and $\lfloor \frac{F-1 }{j_{1}}\rfloor = \lfloor \frac{F-1}{j_{2}}\rfloor $, then $ \alpha
_{j_{1}}>\alpha _{j_{2}}$ and thus, $\varphi( \alpha _{j_{1}})
<\varphi( \alpha _{j_2}) $.
\end{proof}

Lemma~\ref{lem:local min} reduces the complexity of finding the minimum
of~\eqref{eq:binomial} considerably, as it limits the search for optimal $j$,
shown by $j^\star$, to the set of per node budgets $\{\lceil \frac{F}{i}
\rceil:i\in[r]\}$. Therefore, finding the optimal symmetric allocation is reduced
to computing the probability of successful recovery of the original file when
$\alpha_{j^\star}$ fraction of the nodes contain $j^\star$ portion of the file
and the rest of the nodes are empty.

In order to find the optimal value $j^\star$, we derive the probability of
successful decoding of a random receiver. Suppose that only $d$ out of $r$ of
storage nodes to which a receiver has access are non-empty. In this case, the
receiver can recover the file only if $d\ge i$. Therefore, the probability of
successful file recovery when each non-empty storage node has $\lceil
\frac{F}{i} \rceil$ portion of the file is
\begin{equation}
\label{eq:hypergeo}
\frac{1}{\binom{n}{r}}\sum_{d=i}^r \binom{\frac{T}{\lceil \frac{F}{i}
\rceil}}{d}
\binom{n-\frac{T}{\lceil \frac{F}{i}
\rceil}}{r-d},
\end{equation}
which has the from of the CDF of hyper-geometric distribution. We have to
evaluate this function for all $i \in [r]$ and choose $j^\star$ such that the
highest success probability is achieved. Note that given $r$ the solution can be
found in constant time since by Lemma~\ref{lem:local min} we just need to
evaluate~\eqref{eq:hypergeo} $r$ times.
\subsection{Symmetric Allocation in Connected Random Graphs}
\label{subsec:randomgraph}

In a practical network, a node cannot connect (via single hop) to every subset of
$r$ nodes. As a first step towards practical settings, we investigate the
asymptotics of the allocation problem in large random graphs. A random graph
$G(n,p)$ has $n$ vertices, and every two vertices are connected with probability $p$. 
We direct our attention to connected random graphs
since they better describe real networks. $G(n,p)$ is connected iff $p$ is
greater than a critical value $\frac{\log n}{n}$. If $p=\frac{d\log n}{n}$ for
some constant $d$, then $G\bigl(n,\frac{d\log n}{n}\bigr)$ is connected with high
probability and every vertex has degree $r\asymp \log n$~\cite{alon2008}.

Suppose that we want to store a file of size $F$ and budget $T$ in such a graph
provided that each node could reconstruct the file by accessing its 1-hop
neighbors. We are interested in maximizing the probability that a node is
successful, as the number of nodes $n$ in the network grows. It is clear that the
budget $T$ should also grow in order to maintain a certain success probability
for receivers. Otherwise, probability of successful recovery of the file will be
$0$. Given $T$, the mean number of symbols per node is $T/n$ and therefore the
mean number of symbols a node has access to is equal to $\frac{rT}{n}$. Since the
file size is assumed to be constant, the most important regime to study is when
$\frac{rT}{n}\asymp \mu $, where $\mu $ is a constant.

In this regime, every one of the random variables $x_{\mathbf{s}_{1}},\dots
,x_{\mathbf{s}_{r}}$, representing the number of symbols in every chosen node, is
a non-negative random variable with the expectation $\mu/r$. Standard limit
theorems (\cite{petrov1995limit}) imply that the random variable
$\sum_{i=1}^{r}x_{ \boldsymbol{s}_{i}}$ will follow approximately a Poisson
distribution. Consider the case $r\asymp d\log n$ and $T=\mu n/r$. For
$i=1,\ldots ,F$, define $\lambda _{i}$ so that $\alpha _{i}=\frac{\lambda
_{i}}{r}$ and let $ X_{1},\ldots, X_{F}$ be independent Poisson random variables
such that $X_{i}$ follows $\limfunc{Poisson}( k;\lambda _{i}) = \lambda_i^k
e^{\lambda_i}/k!$. Then, classic approximation theorems~(\cite
{le1960approximation, barbour1988stein}) imply that the random variables
$\sum_{i=1}^{r}x_{\boldsymbol{s}_{i}}$ and $\sum_{i=1}^{F}iX_{i}$ behave
similarly. In fact, their difference in total variation obeys the following
bound
\begin{equation*}
\limfunc{TV}\Bigl( \tsum\limits_{i=1}^{r}x_{\boldsymbol{s}
_{i}},\tsum\limits_{i=1}^{F}iX_{i}\Bigl) =\limfunc{O}\Bigl(
\frac{1}{\log n}\Bigl)
.\end{equation*}
Therefore, it is the case that
\begin{equation*}
\mathbf{P}_{\mathbf{s\sim }[n] ^{r}}\Bigl(
\sum_{i=1}^{r}x_{\boldsymbol{s}_{i}}<F\Bigl) =\mathbf{P}\Bigl(
\sum\limits_{i=1}^{F}iX_{i}<F\Bigl) +\limfunc{O}\Bigl( \frac{1}{\log n} \Bigl)
.\end{equation*}

In the symmetric case, we allocate either $0$ or $j$ symbols. Hence, at most
$\lambda _{0}$ and $\lambda _{j}$ have non-zero values. Since in symmetric case
we have $j\alpha _{j}=r\mu $, the previous expression becomes
\begin{equation*}
\mathbf{P}_{\mathbf{s\sim }[n]^{r}}\Bigl(\dsum_{i=1}^{r}x_{\boldsymbol{s}%
_{i}}<F\Bigl)=\!\!\tsum\limits_{k=0}^{\bigl\lfloor\frac{F-1}{j}\bigr\rfloor}%
\frac{(\mu /j)^{k}e^{-\mu /j}}{k!}+\limfunc{O}\Bigl(\frac{1}{\log n}\Bigr)%
\text{.}
\end{equation*}
Similar to the result in the previous section, since $e^{-x}%
\sum_{k=0}^{m}x^{k}/k!$ is a decreasing in $x$, in order to find the optimal 
$j$, we just need to evaluate the above expression for $j\in \{\lceil
F/i\rceil :i\in [r]\}$ and the optimal value $j^{\star }$ is the one
which maximizes the success probability. 
\section{Simulation Results and Conclusion}
\label{sec:simulation}

We numerically investigated the results of section~\ref{sec:optimalsymmetric}
through some simulations. Due to the complexity of the problem, finding the true
optimal allocation for large $n$ is not practical. In order to verify our
results, we compare the approximate solution with optimal (found by searching all
symmetric allocations) for two different examples. First, for $n=10$ and $r=2$,
optimal symmetric allocation consists of two parts: for $T/F \in (1,4.5)$, the
file should be stored entirely and, for $T/F > 4.5$, all storage locations should
store half of the file. As shown in Figure~\ref{fig:j}, approximate solution
gives correct allocation for this case. For the second case, where $n=15$
and $r=5$, the optimal allocation is more complicated. We observe that the choice of
$j/F=1$ remains optimal until $T/F=4.5$. Then, for $T/F \in (4.5,4.65)$, the
optimal number of nodes to use is 9 ($=\lfloor T/j^\star \rfloor$) and each of
them store half of the file. Finally, we observe a transition that spreads the
file maximally over all storage nodes. It is interesting that in this case our
approximate solution again matches the optimal symmetric allocation.
Figure~\ref{fig:prob} plots the probability of success versus normalized budget
for $n = 15$ and $r = 3$.
\begin{figure}[htp]
\begin{center}
    \vspace{-.2in}
  \includegraphics[width=2.4in]{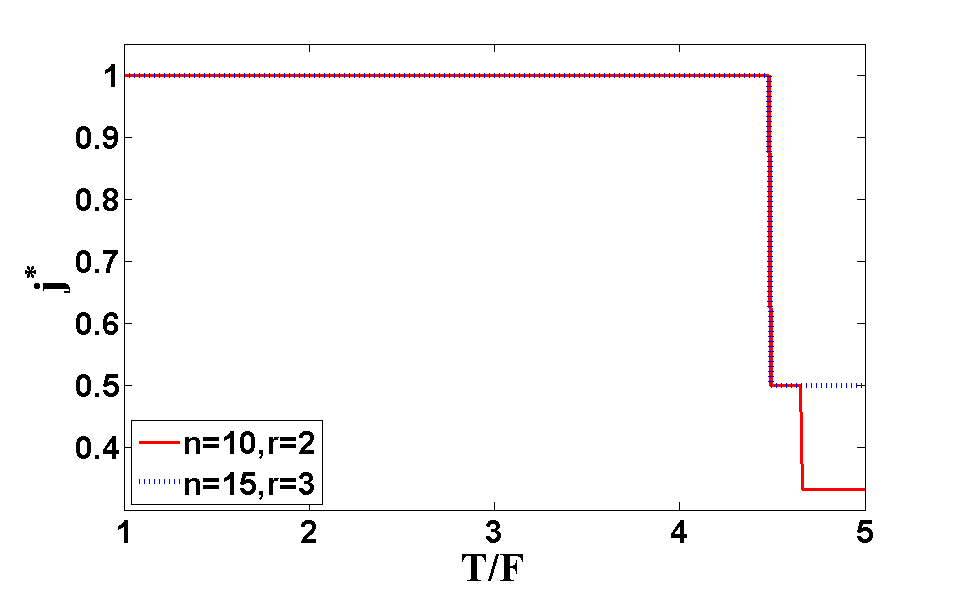}
  \vspace{-.15in}
  \caption{Optimal symmetric allocation vs normalized budget. }
  \label{fig:j}
  \vspace{-.15in}
\end{center}
\end{figure}
\begin{figure}[htp]
\begin{center}
  \includegraphics[width=2.3in]{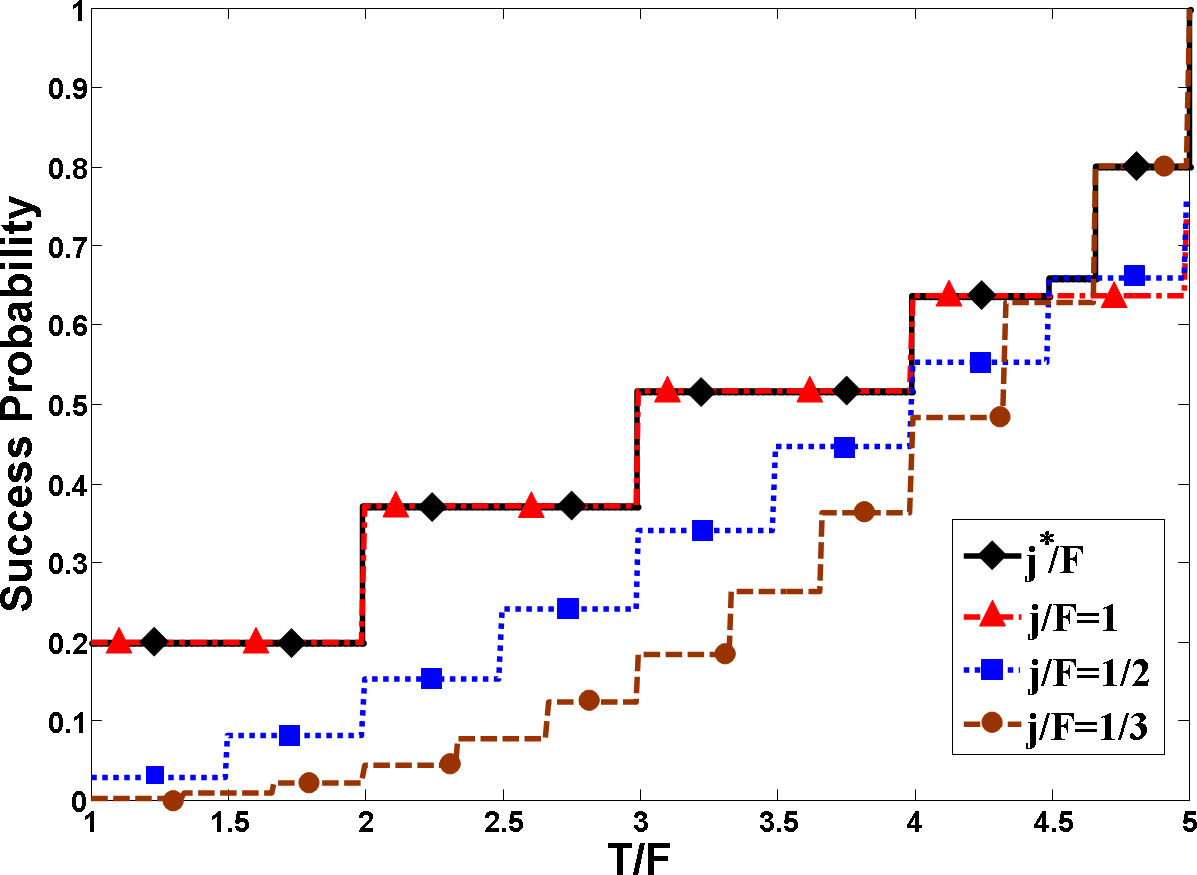}
  \vspace{-.1in}
  \caption{Probability of success vs normalized budget for $n=15$ and $r=3$ in the
  symmetric allocation where each node stores $j/F$ fraction of the file.}
  \label{fig:prob}
  \vspace{-.25in}
\end{center}
\end{figure}

In general, we observe a transition from concentration of budget over minimal
number of nodes to maximal spreading of the budget over all storage nodes as the
budget increases (this observation is also reported in~\cite{Leong2009a}). This
transition is not sharp as we observed that there are cases where the number of
non-empty nodes is neither of the extremes. Also, where the transition happens is
not trivial to determine and for each budget the optimal allocation should be
computed using the machinery developed in this paper. Finding useful algorithms
in order to find the optimal allocation in general sense and also for more
realistic scenarios remains of interest.
\bibliographystyle{IEEEtran}
\bibliography{myBib}

\end{document}
\appendix

\section{Alternative Proof for Equation~\eqref{eq:3}}
\label{appendix:approximate allocation}

\begin{proof}[Theorem \protect\ref{lem:lower bound} continued]
For $t=0,\ldots ,F$, let $L_{t}$ be the set of storage nodes with $t$ number of
symbols, i.e., $L_{t}:=\{ i:x_{i}=t\} $ and $l_{t}$ is the cardinality of $L_t$.
Suppose that we choose $r$ storage nodes randomly. Let $t_i$ be the number of
symbols in $i$-th storage node we choose. Alternatively, equation~\eqref{eq:3}
can be verified by following. First we can write
\begin{eqnarray*}
r!\Psi \left( x_{1},\ldots ,x_{n}\right) &=& r! \dsum_{\substack{
S\subseteq[n]\\ \left| S\right| =r}}\mathbb{I}\left(\tsum_{i\in
S}x_{i}<F\right) 															\\
&=& \sum_{s\in \left[ n\right] ^{\left[ r\right] }}
\mathbb{I}\left( \tsum_{i=1}^{r}x_{s_i}<F\right)
.\end{eqnarray*}
Since we assumed that the number of symbols in $i$th storage node is $t_i$, we
can write the summation over $t_i$'s as
\begin{eqnarray*}
r!\Psi (x)&=&\dsum_{t\in \left[ F\right] _{\ast }^{r}}\mathbb{I}
\left( \tsum_{i=1}^{r}t_{i}<F\right) \sum_{s\in \left[ n\right] ^{\left[ r\right]
}}\tprod_{i=1}^{r}\mathbb{I}\left( s_{i}\in L_{t_{i}}\right) \\
&\geq &\dsum_{t\in \left[ F\right] _{\ast }^{r}}\mathbb{I}\left(
\tsum_{i=1}^{r}t_{i}<F\right) \left( \sum_{s\in \left[ n\right]
^{r}}\tprod_{i=1}^{r}\mathbb{I}\left( s_{i}\in L_{t_{i}}\right) \right. 	\\
&&\quad -\left. \dsum_{i^{\prime }<i^{\prime \prime }}\sum_{s\in \left[ n\right]
^{r}} \mathbb{I}\left( s_{i^{\prime }}=s_{i^{\prime \prime }}\right)
\tprod_{i=1}^{r}\mathbb{I}\left( s_{i}\in L_{t_{i}}\right) \right)
.\end{eqnarray*}
The last inequality is written in order to incorporate the effect of repetition
in the content of the set of chosen nodes and it stands for cases where two
nodes have the same amount of symbols. Let's fix the number of symbols in two
of the nodes, for example the first two nodes have $t_1$ symbols. Then,
\begin{eqnarray*}
&=&\dsum_{t\in \left[ F\right] _{\ast }^{r}}\mathbb{I}\left(
\tsum_{i=1}^{r}t_{i}<F\right) \sum_{s\in \left[ n\right] ^{r}}
\tprod_{i=1}^{r}\mathbb{I}\left( s_{i}\in L_{t_{i}}\right) -				\\
&& \quad \binom{r}{2} \dsum_{t\in \left[ F\right] _{\ast
}^{r-1}}\mathbb{I}\left( t_{1}+\tsum_{i=1}^{r-1}t_{i}<F\right) \sum_{s\in \left[ n\right]
^{r-1}}\tprod_{i=1}^{r-1}\mathbb{I}\left( s_{i}\in L_{t_{i}}\right) 		\\
&=&\dsum_{t\in \left[ F\right] _{\ast }^{r}}\mathbb{I}\left(
\tsum_{i=1}^{r}t_{i}<F\right) \tprod_{k=0}^{F}l_{k}^{\#\left\{
j:t_{j}=k\right\}}- \\
&& \quad \binom{r}{2}\dsum_{t\in \left[ F\right] _{\ast }^{r-1}}
\mathbb{I}\left( t_{1}+\tsum_{i=1}^{r-1}t_{i}<F\right)
\tprod_{k=0}^{F}l_{k}^{\#\left\{ j:t_{j}=k\right\} } 						\\
&\geq &\dsum_{t\in \left[ F\right] _{\ast }^{r}}\mathbb{I}\left(
\tsum_{i=1}^{r}t_{i}<F\right) \tprod_{k=0}^{F}l_{k}^{\#\left\{ j:t_{j}=k\right\}
}-																			\\
&&\quad \binom{r}{2}\dsum_{t\in \left[ F\right] _{\ast }^{r-1}} \mathbb{I}\left(
\tsum_{i=1}^{r-1}t_{i}<F\right)
\tprod_{k=0}^{F}l_{k}^{\#\left\{ j:t_{j}=k\right\} } 						\\
&=&\left[ \dsum_{t\in \left[ F\right] _{\ast }^{r}}\mathbb{I}\left(
\tsum_{i=1}^{r}t_{i}<F\right) \tprod_{k=0}^{F}\left( u^{k}l_{k}\right)
^{\#\left\{ j:t_{j}=k\right\} }- \right.\\
&&\quad \left. \binom{r}{2}\dsum_{t\in \left[ F\right] _{\ast
}^{r-1}}\mathbb{I}\left( \tsum_{i=1}^{r-1}t_{i}<F\right) \tprod_{k=0}^{F}\left(
u^{k}l_{k}\right) ^{\#\left\{ j:t_{j}=k\right\} }
\right] _{_{\upharpoonright u=1}} 											\\
\end{eqnarray*}
The degree of $u$ in $\tprod_{k=0}^{F}\left(u^{k}l_{k}\right) ^{\#\left\{
j:t_{j}=k\right\} }$ is $\tsum_{i=1}^{r-1}t_{i}$ and since the indicator
$\mathbb{I}\left( \tsum_{i=1}^{r-1}t_{i}<F\right)$ eliminates all the degrees
greater than $F$, we can write the expression in the equivalent form below
using the operator $\mathbf{d}_{<F}^u$. we have,
\begin{eqnarray*}
&=&\left[ \mathbf{d} _{<F}^{u}\dsum_{t\in \left[ F\right] _{\ast
}^{r}}\tprod_{k=0}^{F}\left( u^{k}l_{k}\right) ^{\#\left\{ j:t_{j}=k\right\}
}- \right.\\
&&\quad \left. \binom{r}{2}\mathbf{d} _{<F}^{u}\dsum_{t\in \left[ F\right]
_{\ast }^{r-1}}\tprod_{k=0}^{F}\left( u^{k}l_{k}\right) \right] _{_{\upharpoonright
u=1}} \\
&=&\left[ \mathbf{d} _{<F}^{u}\left( \dsum_{k=0}^{F}u^{k}l_{k}\right) ^{r}-
\binom{r}{2}\mathbf{d} _{<F}^{u}\left( \dsum_{k=0}^{F}u^{k}l_{k}\right) ^{r-1}
\right] _{_{\upharpoonright u=1}} 											\\
&\geq &\left[ \mathbf{d} _{<F}^{u}\left( \dsum_{k=0}^{F}u^{k}l_{k}\right) ^{r}
\right] _{_{\upharpoonright u=1}}-\binom{r}{2}n^{r-1}						\\
&=&\left[ \mathbf{d} _{<F}^{u}\left( \dsum_{k=0}^{F}u^{k}\alpha_{k}\right) ^{r}
\right] _{_{\upharpoonright u=1}}
n^r-\binom{r}{2}n^{r-1}
. \end{eqnarray*}
From where it is clear that \eqref{eq:3} follows.
\end{proof}